\theoremstyle{theorem}
\newtheorem{theorem}{Theorem}
\newtheorem{exam}{Example}
\newenvironment{example}{\begin{exam} \rm }{\hfill  $\triangleleft$ \end{exam}}
\newtheorem*{exam1}{Example 1 (continued)}
\newtheorem*{exam3}{Example 3 (continued)}
\newtheorem{assum}{Assumption}
\newtheorem{rema}{Remark}
\newenvironment{remark}{\begin{rema} \rm }{\hfill $\triangleleft$ \end{rema}}
\newtheorem{defin}{Definition}
\newtheorem{mechan}{Mechanism}
\pgfplotsset{my style/.append style={axis x line=middle, axis y line=
           middle}}
\def\citeapos#1{\citeauthor{#1}'s (\citeyear{#1})}
\DeclareFontFamily{U}{mathx}{\hyphenchar\font45}
\DeclareFontShape{U}{mathx}{m}{n}{
      <5> <6> <7> <8> <9> <10>
      <10.95> <12> <14.4> <17.28> <20.74> <24.88>
      mathx10
      }{}
\DeclareSymbolFont{mathx}{U}{mathx}{m}{n}
\DeclareMathSymbol{\bigtimes}{1}{mathx}{"91}
\renewcommand*{\@seccntformat}[1]{%
  \csname the#1\endcsname.\quad}
\begin{document}

\title{Identification of misreported beliefs}

\author{
\textsc{Elias Tsakas}\footnote{Department of Economics (MPE), Maastricht University, P.O. Box 616, 6200 MD, Maastricht, The Netherlands; Homepage: \url{www.elias-tsakas.com}; E-mail: \href{mailto:e.tsakas@maastrichtuniversity.nl}{\texttt{e.tsakas@maastrichtuniversity.nl}}}\\ 
\small{\textit{Maastricht University}}}

\date{\small{December 2021}}

\maketitle

\begin{abstract}

\noindent It is well-known that subjective beliefs cannot be identified with traditional choice data unless we impose the strong assumption that preferences are state-independent. This is seen as one of the biggest pitfalls of incentivized belief elicitation. The two common approaches are either to exogenously assume that preferences are state-independent, or to use intractable elicitation mechanisms that require an awful lot of hard-to-get non-traditional choice data. In this paper we use a third approach, introducing a novel methodology that retains the simplicity of standard elicitation mechanisms without imposing the awkward state-independence assumption. The cost is that instead of insisting on full identification of beliefs, we seek identification of misreporting. That is, we elicit beliefs with a standard simple elicitation mechanism, and then by means of a single additional observation we can tell whether the reported beliefs deviate from the actual beliefs, and if so, in which direction they do.

\vspace{0.5\baselineskip}

\noindent \textsc{Keywords:} Belief elicitation; misreporting; state-dependent preferences.

\noindent \textsc{JEL codes:} C91, C93, D80, D81, D82, D83.

\end{abstract}

\section{Introduction}

Being able to obtain unbiased estimates of people's beliefs is of outmost importance for explaining and predicting behavior, and designing policy interventions \citep{Manski2004}. However, given the inherent latency of beliefs, obtaining such estimates has to rely heavily on (self-)reporting. Unfortunately, in practice, people often misreport their beliefs, even when they are incentivized to tell the truth. Thus, an important question is to identify whether their reports indeed deviate from their actual beliefs, and if so, in which direction.

As an illustration of the problem, take the usual finding: Democrats and Republicans systematically disagree --- usually by a large margin --- on the probability they reportedly assign to some politically charged event, e.g., the winner of the next elections \citep{Bullock2015}. The literature on politically motivated beliefs would say that the two groups actually have different beliefs. However, there is an alternative explanation: actual beliefs are not that divergent, and differences are amplified by misreporting (in the direction of their preferred parties respectively). Making this distinction is crucial, e.g., when deciding on whether to regulate (mis-)information. This is because, actual belief polarization --- as opposed to mere exaggeration  of reported beliefs --- has the potential to trigger extreme political reactions. 

Another common observation is that most people report that they are themselves more skilled than their average peer, e.g., when asked about their driving skills  \citep{Svenson1981}. Once again, the literature on motivated beliefs would say that this is consistent with the actual beliefs being biased in favor of the preferred state. But once again, one could argue that this is because people misreport their beliefs about their own perceived ability. And again, the distinction can have important consequences, e.g., for insurance purposes. If people actually overestimate their abilities, they may end up taking more risks when they drive, as opposed to situations where they just exaggerate their self-reported abilities.

But why would people misreport even when they are incentivized to tell the truth? A number of psychological factors have been recently proposed in the literature, such as self-image \citep{EwersZimmermann2015}, preference to appear truthful to their audience \citep{Thaler2021}, deliberate attempt to express attitudes, known as ``cheerleading" \citep{Bullock2015, HannonRidder2021}. The common thread among these explanations is that people have state-dependent preferences, i.e., they have some sort of stakes in the realization of the underlying event. 

Among theorists this is not particularly surprising: it is well known for years that, whenever preferences are state-dependent, beliefs cannot be identified using only data on traditional choices among acts \citep{Fishburn1973, KarniSchmeidlerVind1983, Dreze1987}. In particular, even if we somehow observed the complete preference relation (over acts), it would still be the case that for every belief there would exist a (state-dependent) utility function such that the resulting expected utility function would represent these preferences. So, as colossal as \citeapos{Savage1954} subjective expected utility theory is, it will only help us to identify beliefs if preferences are exogenously assumed to be state-independent.\footnote{The same holds true for the famous subsequent attempt of \cite{AnscombeAumann1963}.} And of course, the same problem is inherited by almost all belief elicitation mechanisms, viz., proper scoring rules \citep{Brier1950, Good1952,  Savage1971}, binarized scoring rules \citep{HossainOkui2013}, matching probabilities \citep{DucharmeDonnell1973, KadaneWinkler1988, Baillon2018}, clock auctions \citep{Karni2009, Tsakas2019}, promisory notes \citep{DeFinetti1974, KadaneWinkler1988}. This is because all these mechanisms essentially boil down to choosing from menus of acts, i.e., they rely on traditional choice data.
 
As a response to this conundrum, two approaches have been taken in the literature, which we will call the ``practically oriented" and the ``theoretically sound" approach. 

According to the practically oriented approach, it is still assumed, in the spirit of \cite{Savage1954} and \cite{AnscombeAumann1963}, that preferences are state-independent. The idea is pretty pragmatic: existing elicitation mechanisms are simple and easy to implement. This is what makes them appealing after all. Thus, we are keen to maintain this simplicity even if it comes at the price of imposing an exogenous structural assumption (viz., state-independence). 

On the other hand, the theoretically sound approach dispenses the awkward state-independence assumption. However, this means that we need to go well beyond traditional choice data, in ways that make belief elicitation practically intractable. In this sense, it is not surprising that this literature is quite thin \citep{Karni1999, JaffrayKarni1999} and has not been adopted by empirical researchers.\footnote{There is also large related literature within axiomatic decision theory. However, the different conditions that lead to identification of beliefs (and utilities) are perhaps even more demanding. Of course, from the point of view of this literature, this is not a concern, as the overall aim is to show that identification is in principle feasible and therefore the notion of subjective beliefs is well-defined. We further elaborate on the specific contributions in the literature section.}

It is not hard to guess that we find neither of the two approaches very satisfactory. So, we propose a third alternative, which maintains the simplicity of the existing elicitation tasks, without at the same time imposing the awkward state-independence assumption. The price that we will pay for reconciling the two is that we no longer insist on full identification of beliefs, but rather we only seek identification of misreporting. In other words, we will be able to tell if the reported beliefs --- that we have already elicited using our standard simple techniques --- deviate from the actual beliefs, and if so in which direction. For instance, in our earlier example, we will not be able to pin down the exact beliefs of each Democrat and each Republican, but we will be able to tell which ones exaggerate when they report their forecasts. Thus, we will be able to test whether the difference in their beliefs has been amplified by the fact that they both have stakes in the event they are forecasting.


Our method is on a high level inspired by the moral hazard literature \citep{Dreze1987, DrezeRustichini1999, Baccelli2021}, in that we exploit the presence of some action which is known to affect the agent's belief in a certain direction.\footnote{The term ``moral hazard" should not be confused with the one used in information economics. } We call such an action \textit{influential}. Of course, the crucial difference is that in the moral hazard literature the influential action is controlled by the agent herself, whereas in our case it is controlled by the analyst (Remark \ref{REM:moral hazard}). For instance, in the context of our earlier example, an influential action would be a donation to the campaign of the Democratic candidate, or the initiation of a negative rumor for the Republican candidate, in some swing state. In fact, any action that helps one of the two candidate would do the trick, even if the help is marginal (see Examples \ref{EX:investor}-\ref{EX:driver}). Although we cannot quantify the effect of each of these actions on an agent's beliefs, we can safely assume that there will be an increase in the probability of the Democratic candidate winning. 

Then, our method proceeds as follows. First, we elicit beliefs using a proper binarized scoring rule, i.e., an incentive-compatible scoring rule that pays in probabilities to win a fixed prize.\footnote{Later in the paper, we show that our analysis holds verbatim for any incentive-compatible binarized elicitation mechanism, e.g., binarized matching probabilities, or clock auctions (Section \ref{S:other elicitation methods}). The common feature of all these mechanisms is that they do not need to impose any assumption on the subject's risk preferences, which is what makes them very appealing. We also discuss the extension to non-binarized mechanisms, such as arbitrary proper scoring rules (Section \ref{S:non-binarized scoring rules}).} Subsequently, we ask the agent to choose between two fifty-fifty lotteries, whose outcomes are combinations of whether the prize is paid and whether the influential action is taken. In the context of the previous example, the two lotteries can be labeled as a ``risky option" and a ``hedging option". The risky option is a coin toss that either will pay both the prize to the agent and the donation to the campaign, or will not pay any of the two. On the other hand, the hedging option is a coin toss that either will only pay the prize to the agent, or will only pay the donation to the campaign. Then, our main result shows that the choice between the two lotteries identifies misreporting in the preceding belief elicitation task (Theorem \ref{T:binarized}). In particular, if the risky option (resp., the hedging option) is chosen, the reported probability of the Democratic candidate winning is greater (resp. smaller) than the actual belief. If the subject is indifferent between the two options, then the reported belief is the same as the actual belief, i.e., there is no misreporting.

The bottomline is that our approach allows us to keep using the state-of-the-art elicitation methodology, and only adds on top a simple task which identifies whether the agent has misreported beliefs.\footnote{In Section \ref{S:hedging}, we explain that distortions due to hedging opportunities are not really a concern.} Thus, simply put, we make a significant step in solving a long-standing problem (viz., belief elicitation under state-dependent preferences) at a very small cost (viz., adding a single question to the current methodology). Of course, our solution is partial, but in many applied settings --- where a qualitative analysis is used --- full identification is anyway not needed. Besides, if our method concludes that there is no misreporting, full identification is achieved. 

The only papers that introduce mechanisms for eliciting beliefs under state-dependent preferences are \cite{Karni1999} and \cite{JaffrayKarni1999}, with the latter proposing two different mechanisms. In particular, \cite{Karni1999} and the first mechanism of \cite{JaffrayKarni1999} rely on the assumption that state utilities are bounded, and they approximate the actual beliefs in the limit as monetary incentives grow arbitrarily large.\footnote{For an extensive discussion on the boundedness of the utility function, see \cite{Wakker1993}.} This is a rather uncomfortable convention, as the elicitation task will rely on a very large dataset. Moreover, we will either need to incur an extremely high cost, or to use hypothetical data. These problems are recognized by the authors of the two aforementioned papers, who point out that in those early days of the literature there was no other option \citep[e.g.,][p.485]{Karni1999}. The second mechanism in \cite{JaffrayKarni1999} assumes that state-dependence enters the picture in terms of unobserved state-dependent payments. So, first, it proceeds to elicit these payments, and once these are known, it goes on to elicit beliefs using standard techniques. Of course, this is a rather restrictive setting: in most interesting applications, preferences over states are intrinsic. Besides, in order to elicit the state-dependent payments is quite demanding in terms of the amount of data that is needed.

As we have already mentioned, there is also a large literature within axiomatic decision theory. The various attempts to identify beliefs (without exogenously assuming state-independence) differ in terms of the additional choice domain --- beyond traditional choice data --- that they employ, and of course on the corresponding axioms they impose. For instance, \cite{Fishburn1973} allows for comparison between acts conditional on different events. \cite{KarniSchmeidlerVind1983} and \cite{KarniSchmeidler2016} introduce hypothetical preferences over acts, conditional on exogenously given probabilities over the states. \cite{Kadane1990} allow the agent to compare lotteries at different states. \cite{Dreze1987} and \cite{DrezeRustichini1999} allow for the agent to be able to influence the state realization, in different ways depending on the act she faces. \cite{Lu2019} introduces stochastic choices under different information structures. For a more complete account of this literature, we refer to the reviews of \cite{DrezeRustichini2004}, \cite{Karni2008}, and more recently \cite{Baccelli2017}. Of course, the aim of this whole literature is anyway to establish that beliefs are well-founded and that they can be in principle identified, rather than to suggest concrete methods for actually eliciting said beliefs. In this sense, it is not surprising that using these representation results to actually identify beliefs would be quite a challenge.

Finally, our work is methodologically similar to the one of \cite{Offerman2009}, who first elicit beliefs using standard proper scoring rules, and then design a subsequent test that identifies misreporting due to violations of risk-neutrality and/or presence of probability weighting. 

The paper is structured as follows: Section \ref{S:background} presents the relevant background  concepts. In Section \ref{S:bias identification} we introduce our mechanism and present our results. In Section \ref{S:discussion} we discuss extensions and implementation.

\section{Background}\label{S:background}

\subsection{State-dependent subjective expected utility}

Take a binary state space $\Theta=\{\theta_0,\theta_1\}$. Probability measures over $\Theta$ are identified by the probability they attach to $\theta_1$. Let $\mathcal{L}_X$ be the set of lotteries over a set $X\subseteq\mathbb{R}$ of monetary payoffs. Moreover, let $\mathcal{F}=\mathcal{L}_X^\Theta$ denote the set of acts, with typical element $f$. Consider a (female) agent who maximizes  subjective expected utility (abbrev., SEU). That is, there exist a state-dependent (strictly increasing) Bernoulli utility function $u=(u_0,u_1)$ and a belief $\mu\in(0,1)$, such that her preferences over $\mathcal{F}$ are represented by the function 
\begin{equation}\label{EQ:SDSEU}
\mathbb{E}_\mu(u(f)):=(1-\mu) u_0\bigl(f(\theta_0)\bigr)+ \mu u_1\bigl(f(\theta_1)\bigr), 
\end{equation}
where $u_0(f(\theta_0)):=\langle f(\theta_0),u_0\rangle$ and $u_1(f(\theta_1)):=\langle f(\theta_1),u_1\rangle$ are the (vNM) expected utilities that the lotteries $f(\theta_0)$ and $f(\theta_1)$ yield at states $\theta_0$ and $\theta_1$ respectively. We will say that the SEU representation is state-independent, if $u_0=u_1$.

As it is well-known this representation is not unique. Indeed, for an arbitrary belief $\tilde{\mu}\in(0,1)$, the pair $(\tilde{u},\tilde{\mu})$ is also a subjective expected utility (SEU) representation of the same preferences, if we set 
\begin{equation}
\tilde{u}_0=\frac{1-\mu}{1-\tilde{\mu}}u_0 \ \mbox{ and } \ \tilde{u}_1=\frac{\mu}{\tilde{\mu}}u_1.
\end{equation}
This is because $\mathbb{E}_{\tilde{\mu}}(\tilde{u}(f))=\mathbb{E}_\mu(u(f))$ for every act $f\in\mathcal{F}$. As a result, beliefs cannot be identified with traditional choice data (i.e., preferences over $\mathcal{F}$) alone. Note that identification of beliefs is impossible even if there exists a state-independent SEU representation.\footnote{Recall that a state-independent SEU representation exists whenever the monotonicity axiom of \cite{AnscombeAumann1963} is satisfied.} In order to deal with this fundamental identification problem, different solutions have been proposed in the literature, relying on collecting additional data, well beyond choices over acts. 

One crucial point we should stress is that throughout the paper, we will assume that there exists an actual belief and an actual utility function. Such a pair can be interpreted either as a primitive --- which is actually how we prefer to view it --- or as the parameters that one would obtain by using one of the aforementioned identification results. One way or another, we will say that preferences are state-independent whenever a state-independent SEU representation exists, and moreover it coincides with the \textit{actual} SEU representation. The first part of the previous statement (i.e., existence of a state-independent representation) can be tested with traditional choice data, but the second part (i.e., the state-independent representation being the actual one) needs additional data in order to be tested.


\subsection{Proper scoring rules}

Scoring rules are mechanisms that aim at incentivizing the agent to report her (actual) beliefs truthfully, by rewarding her based on her reported belief and the realized state. Formally, a scoring rule is a function 
\begin{equation*}
\pi:[0,1]\rightarrow\mathcal{F}
\end{equation*}
that takes as an input the reported belief $r\in[0,1]$, and returns as an output the act $\pi_r\in\mathcal{F}$ that the agent receives in return. Formally speaking, a scoring rule is the menu of acts, $\{\pi_r\ | \ 0\leq r\leq1\}$. In this sense, the agent's reported belief is a single point of traditional choice data.

A scoring rule $\pi$ is called binarized whenever it pays in lotteries over two fixed monetary payoffs \citep{HossainOkui2013}, i.e., for every report $r$ and every state $\theta$, the lottery $\pi_r(\theta)$ is distributed over a good payoff $\overline{x}$ and a bad payoff $\underline{x}$. We will refer to the good payoff as the prize, and to the probability of winning the prize as the winning probability. 

A scoring rule is proper if reporting truthfully (uniquely) maximizes the total expected payoff (given the actual beliefs) over the set of all possible reports $r\in[0,1]$. Obviously, for a binarized scoring rule, maximizing the total expected payoff is equivalent to maximizing the total winning probability. 

The appeal of properness is that it claims to identify the agent's beliefs using a single observation of traditional choice data. However --- given the earlier identification problem --- it is not surprising that this cannot be done, unless we impose additional assumptions on the utility functions \citep{KadaneWinkler1988, KarniSafra1995}. To see why this is the case, suppose that some $p\in(0,1)$ has been reported in response to the scoring rule, and observe that there exist infinitely many SEU representations, some with beliefs $\mu<p$ and some others with beliefs $\mu>p$. In principle, we do not know which of these representations corresponds is the actual one. So, in order to identify the actual beliefs we would need to somehow exogenously restrict the set of SEU representations. The way this is typically done is by imposing exogenous assumptions on the utility function. For instance, when a binarized scoring rule is used, it is implicitly assumed that preferences are state-independent. On the other hand, when an arbitrary --- non-binarized --- scoring rule is used, even stronger assumptions are needed, i.e., both state-independence and risk-neutrality are implicitly assumed. The bottom line is that state-independence is always needed if we want to maintain incentive-compatibility. And this has been recognized as perhaps the biggest pitfall of incentivized belief elicitation.


\section{Identifying deviations from actual beliefs}\label{S:bias identification}

From our previous discussion it follows that, if we want to fully identify the agent's beliefs, we will eventually face a fundamental tradeoff. Namely, we will need either to have a rich dataset (going well beyond traditional choice data), or to exogenously assume state-independent preferences. This tradeoff is well-known among theorists, but is often overlooked in practice, where we simply use proper scoring rules without further discussion on the possibility of preferences being state-dependent.  

Here we will take a different approach. We will maintain both the principle of a ``minimal dataset", and we will dispense with the assumption of state-independent preferences. However, in order to be able to accommodate both of these requirements simultaneously, we will relax full identification. In particular, instead of aiming to pin down the agent's actual beliefs, we simply want to learn if the agent has misreported or not. And if she has, we also want to know which direction she has deviated. Notably, we will do all this at the expense of only one additional observation (besides the belief report).

Inspired on a high level by the moral hazard literature \citep{Dreze1987, DrezeRustichini1999,Baccelli2021}, suppose that we can influence the state realization. In particular, assume that there exists some action $\hat{a}$ available to ourselves (viz., the experimenters), which is commonly known to affect the likelihood of $\theta_1$ in a certain direction. Throughout the paper, we will refer to $\hat{a}$ as the \textit{influential action}, and without loss of generality, we will assume that increases the probability of $\theta_1$ to some $\hat{\mu}>\mu$. Notably, we remain agnostic on how much the belief will increase in response to the influential action: all we know is that it will increase. Not picking the influential action $\hat{a}$ means that we stick to the default action $a$ which would leave the agent's beliefs unaffected to $\mu$. Here are a couple of examples of influential actions:

\begin{example}\label{EX:voter}
We are interested in the beliefs of a Democrat about the Democratic candidate winning the upcoming elections. One influential action would be to donate an amount to the Democratic campaign. Another influential action would be to commit some additional votes in a swing state to this candidate (assuming of course that this is a credible commitment). A third influential action would be to start a rumor on social media that the Democratic candidate will increase minimum wages and will decrease taxes. Note that this last action would not involve deception: the influential action is not the realization of the rumor (i.e., the increase of wages or the decrease of taxes) but rather the rumor itself.\footnote{As a disclaimer, we are not recommending experiments that spread fake news. We only use it as an example to illustrate how an influential action functions.} In either case, the agent's subjective probability of the Democratic candidate winning will increase.
\end{example}

\begin{example}\label{EX:investor}
We are interested in an investor's beliefs about a company going bankrupt before the end of the current year. One influential action would be for us to invest money in this company. Another influential action would be to start a rumor that the company is about to file for new patent.\footnote{Here the same comment (regarding deception) applies as in the previous example.} In both cases, it is reasonable to assume that the investor's subjective belief of bankruptcy will go down. 
\end{example}

\begin{example}\label{EX:driver}
We are interested in the beliefs of a young economist about her paper being published in a top journal. One influential action would be to put a good word with a friendly editor. Another influential action would be to commit that a prominent economist will carefully read the manuscript and provide comments before the paper is submitted. In both cases, the subjective probability the author assigns to the paper being accepted will go up.
\end{example}

\begin{remark}\label{REM:choice of influential action}
It is really important to choose an influential action which does not affect the agent directly, besides the effect that it has on the state space. For instance in Example \ref{EX:driver}, we should not try to influence the editor if the author of the paper has ethical issues with lobbying. This is because in such case, the influential action would distort  not only the agent's beliefs, but also her utilities.
\end{remark}

Notice that in our case, it is us (viz., the experimenters) who control the influential action, as opposed to the moral hazard literature where the action is controlled by the agent. As a result, we can construct lotteries over the product space $X\times \{a,\hat{a}\}$. These are not usual lotteries that pay only in monetary payoffs. Instead, an outcome of such a lottery would be a pair of a monetary payoff (which affects the agent directly) and an action (which affects the agent indirectly via the uncertainty on $\Theta$).

For the two monetary outcomes, $\underline{x}$ and $\overline{x}$, define the lotteries:
\begin{eqnarray}
A&:=&\biggl(\frac{1}{2}\times(\underline{x},a),\frac{1}{2}\times(\overline{x},\hat{a})\biggr),\nonumber\\
& & \\
B&:=&\biggl(\frac{1}{2}\times(\overline{x},a),\frac{1}{2}\times(\underline{x},\hat{a})\biggr). \nonumber
\end{eqnarray}
Intuitively, in the context of Example \ref{EX:voter}, suppose that the good payoff is $\overline{x}=\text{\$10k}$, while the influential action is a donation of $\hat{a}=\text{\$10k}$. Then, $A$ can be seen as a ``risky option" for the agent, in the sense that either \$20k will be paid out in total (\$10k to herself and \$10k to the campaign), or no money at all will be paid out. On the other hand, $B$ can be seen as a ``hedging option" for the agent, in the sense that \$10k will be paid out anyway, either to the agent herself or to campaign.

\begin{remark}\label{REM:moral hazard}
By having a choice between $A$ and $B$, the agent cannot influence the state realization. This is because regardless which of the two lotteries is chosen, the influential action and the default action will both occur with probability one half. This is a major difference with the moral hazard literature, which relies on the agent being able to affect the state. 
\end{remark}

Then, the following results use the agent's revealed preferences over the pair of lotteries to identify misreporting, viz., $A$ is chosen (resp., $B$ is chosen) if the reported belief is above (resp., below) the actual beliefs.

\begin{theorem}\label{T:binarized}
Let $\mu$ be the agent's actual beliefs, and $p$ be her reported beliefs in response to a proper binarized scoring rule. Then, we have $p> \mu$ (resp., $p<\mu$), if and only if, $A\succ B$ (resp., $A\prec B$).
\end{theorem}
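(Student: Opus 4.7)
The plan is to reduce both the report $p$ and the preference between $A$ and $B$ to the same comparison between the two utility gaps
\[
\alpha_0 \;:=\; u_0(\overline{x})-u_0(\underline{x}) \;>\;0, \qquad \alpha_1 \;:=\; u_1(\overline{x})-u_1(\underline{x}) \;>\;0,
\]
where positivity follows from strict monotonicity of each $u_\theta$. My target is to prove $p>\mu \iff \alpha_1>\alpha_0 \iff A\succ B$, together with the analogous reverse chain, which together yield the theorem.

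For the first equivalence I would exploit that the scoring rule is binarized. Writing $q_0(r),q_1(r)$ for the winning probabilities at $\theta_0,\theta_1$ under report $r$, affinity of $u_\theta$ on two-point lotteries gives
\[
\mathbb{E}_\mu\bigl(u(\pi_r)\bigr) \;=\; C + (1-\mu)\alpha_0\, q_0(r) + \mu\alpha_1\, q_1(r),
\]
with $C$ independent of $r$. Dividing by the positive scalar $(1-\mu)\alpha_0+\mu\alpha_1$, this maximization is equivalent to maximizing $(1-\nu)q_0(r)+\nu q_1(r)$ with
\[
\nu \;:=\; \frac{\mu\alpha_1}{(1-\mu)\alpha_0+\mu\alpha_1} \;\in\;(0,1).
\]
But that is exactly the objective of a state-independent SEU agent with belief $\nu$, so properness of $\pi$ pins down the unique maximizer as $r=\nu$, giving $p=\nu$. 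A one-line manipulation then shows $p>\mu$ iff $(1-\mu)\alpha_1>(1-\mu)\alpha_0$, i.e.\ iff $\alpha_1>\alpha_0$ (and symmetrically for $<$).

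For the second equivalence I would compute the SEUs of $A$ and $B$ directly. Using that action $a$ leaves beliefs at $\mu$, action $\hat a$ shifts them to $\hat\mu$, and the monetary payoff is drawn independently of the state, direct cancellation yields
\[
\mathbb{E}\bigl(u(A)\bigr)-\mathbb{E}\bigl(u(B)\bigr) \;=\; \tfrac{1}{2}(\hat\mu-\mu)(\alpha_1-\alpha_0).
\]
Since $\hat\mu>\mu$, this reads off $A\succ B \iff \alpha_1>\alpha_0$, and likewise with $\prec$ and the reversed inequality. Chaining the two equivalences delivers the theorem.

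The one subtle step is the characterization $p=\nu$: properness is usually phrased for an agent who maximizes expected winning probability, i.e.\ under the state-independent benchmark. The key observation that makes properness do the work even under genuine state-dependence is that a binarized scoring rule keeps the SEU objective affine in each $q_\theta(r)$, so the state-dependent agent behaves exactly like a state-independent one whose belief has been tilted away from $\mu$ toward whichever state carries the larger utility gap. That tilt is precisely what the hedging-vs-risky comparison between $A$ and $B$ is designed to detect.
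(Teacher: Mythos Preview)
Your argument is correct and follows the same two-step architecture as the paper: both halves of the biconditional are reduced to the single comparison $\alpha_1 \gtrless \alpha_0$, and the computation for the $A$ versus $B$ side is essentially identical to the paper's (the paper also lands on $(\hat\mu-\mu)(\alpha_1-\alpha_0)$ after the same cancellations).

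Where you differ is in how you establish $p\gtrless\mu \iff \alpha_1\gtrless\alpha_0$. The paper does not identify the report explicitly; instead it writes down the two properness inequalities at $r$ and at $p$, rearranges them into bounds on the ratio $(\pi_p^0-\pi_r^0)/(\pi_r^1-\pi_p^1)$, and takes side limits as $r\to p$ to squeeze out the relation $\frac{p}{1-p}=\frac{\mu}{1-\mu}\cdot\frac{\alpha_1}{\alpha_0}$. You instead observe that the binarized SEU objective is a positive affine image of the expected winning probability of a state-independent agent with the tilted belief $\nu=\mu\alpha_1/[(1-\mu)\alpha_0+\mu\alpha_1]$, and then invoke properness once to conclude $p=\nu$ directly. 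Your route is shorter and avoids the limit step; the paper's route has the minor advantage that it never needs to name $\nu$ or rely on properness holding at every belief in $[0,1]$ (though of course that is the standard definition). Both yield the same closed form, since $\frac{\nu}{1-\nu}=\frac{\mu}{1-\mu}\cdot\frac{\alpha_1}{\alpha_0}$.
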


\begin{proof}[\textsc{Proof}]
First, denote by $\pi_r^k:=\pi_r(\theta_k)(\overline{x})$ the winning probability (at state $\theta_k$) when the report $r$ is submitted. By properness of $\pi$, we have 
\begin{eqnarray*}
(1-r)\pi_r^0+r\pi_r^1&>&(1-r)\pi_p^0+r\pi_p^1,\\
(1-p)\pi_r^0+p\pi_r^1&<&(1-p)\pi_p^0+p\pi_p^1.
\end{eqnarray*}
For any $r<p$, we have $\pi_r^0<\pi_p^0$ and $\pi_r^1>\pi_p^1$, and therefore
\begin{equation}\label{EQ:proof properness 1}
\frac{r}{1-r}<\frac{\pi_p^0-\pi_r^0}{\pi_r^1-\pi_p^1}<\frac{p}{1-p},
\end{equation}
whereas for any $r>p$, we have $\pi_r^0>\pi_p^0$ and $\pi_r^1<\pi_p^1$, and therefore we obtain
\begin{equation}\label{EQ:proof properness 2}
\frac{p}{1-p}<\frac{\pi_p^0-\pi_r^0}{\pi_r^1-\pi_p^1}<\frac{r}{1-r}.
\end{equation}
Taking side limits of in (\ref{EQ:proof properness 1}) and (\ref{EQ:proof properness 2}) as $r$ approaches $p$ from the below and above respectively, yields
\begin{equation}\label{EQ:proof properness 3}
\lim_{r\uparrow p}\frac{\pi_p^0-\pi_r^0}{\pi_r^1-\pi_p^1}= \lim_{r\downarrow p}\frac{\pi_p^0-\pi_r^0}{\pi_r^1-\pi_p^1}=\frac{p}{1-p}.
\end{equation}

\vspace{0.5\baselineskip} \noindent Now, given the actual belief $\mu$, the expected utility from the report $r$ is equal to
\begin{equation}
\mathbb{E}_\mu\bigl(u(\pi_r)\bigr)=(1-\mu)\Bigl(\pi_r^0 u_0(\overline{x})+\bigl(1-\pi_r^0\bigr) u_0(\underline{x})\Bigr)+\mu\Bigl(\pi_r^1 u_1(\overline{x})+\bigl(1-\pi_r^1\bigr)u_1(\underline{x})\Bigr).
\end{equation}
By $p$ being actually reported, it follows that $\mathbb{E}_\mu\bigl(u(\pi_p)\bigr)\geq\mathbb{E}_\mu\bigl(u(\pi_r)\bigr)$ for every $r\neq p$. This means that whenever it is the case that $r<p$, we have
\begin{equation}\label{EQ:proof properness 4}
\frac{\mu}{1-\mu}\cdot\frac{u_1(\overline{x})-u_1(\underline{x})}{u_0(\overline{x})-u_0(\underline{x})}\geq \frac{\pi_p^0-\pi_r^0}{\pi_r^1-\pi_p^1},
\end{equation}
while whenever it is the case that $r>p$, we obtain
\begin{equation}\label{EQ:proof properness 5}
\frac{\mu}{1-\mu}\cdot\frac{u_1(\overline{x})-u_1(\underline{x})}{u_0(\overline{x})-u_0(\underline{x})}\leq \frac{\pi_p^0-\pi_r^0}{\pi_r^1-\pi_p^1}.
\end{equation}
Hence, if we take the side limits in (\ref{EQ:proof properness 4}) and (\ref{EQ:proof properness 5}) as $r$ approaches $p$ from below and above respectively, and we use Equation (\ref{EQ:proof properness 3}), it will follow that
\begin{equation}
\frac{p}{1-p}=\frac{\mu}{1-\mu}\cdot\frac{u_1(\overline{x})-u_1(\underline{x})}{u_0(\overline{x})-u_0(\underline{x})}.
\end{equation}
Obviously, this directly implies the equivalence
\begin{equation}\label{EQ:proof theorem 1}
p\geq\mu \ \Leftrightarrow \ \frac{u_1(\overline{x})-u_1(\underline{x})}{u_0(\overline{x})-u_0(\underline{x})}\geq1,
\end{equation}
with the first inequality being strict, if and only if, the second one is strict.

\vspace{0.5\baselineskip} \noindent Now, let $\hat{\mu}>\mu$ be the unobserved probability that the agent attaches to $\theta_1$ if $\hat{a}$ is chosen. Then, the following equivalences hold:
\begin{eqnarray*}
A\succeq B&\Leftrightarrow&\frac{1}{2}\mathbb{E}_{\mu}\bigl(u(\underline{x})\bigr)+\frac{1}{2}\mathbb{E}_{\hat{\mu}}\bigl(u(\overline{x})\bigr)\geq \frac{1}{2}\mathbb{E}_{\mu}\bigl(u(\overline{x})\bigr)+\frac{1}{2}\mathbb{E}_{\hat{\mu}}\bigl(u(\underline{x})\bigr)\\
& & \\
&\Leftrightarrow& \mathbb{E}_{\hat{\mu}}\bigl((u(\overline{x})-u(\underline{x}\bigr))\geq\mathbb{E}_\mu\bigl((u(\overline{x})-u(\underline{x})\bigr)\\
& & \\
&\Leftrightarrow&(\hat{\mu}-\mu)\bigl(u_1(\overline{x})-u_1(\underline{x})\bigr)\geq (\hat{\mu}-\mu)\bigl(u_0(\overline{x})-u_0(\underline{x})\bigr)\\
& & \\
&\Leftrightarrow& \frac{u_1(\overline{x})-u_1(\underline{x})}{u_0(\overline{x})-u_0(\underline{x})}\geq1,
\end{eqnarray*}
with the last inequality being strict, if and only if the preference relation is strict. Combining this last equivalence with (\ref{EQ:proof theorem 1}) completes the proof.
\end{proof}

\begin{remark}
In case the influential action is known to decrease --- rather than increase --- the subjective probability of $\theta_1$, the previous result still stands verbatim with the preference ordering reversed, i.e., $p>\mu$, if and only if, $A\prec B$.
\end{remark}

Note that the previous results crucially rely on us being able to set the probabilities in each of the two lotteries to exactly fifty-fifty. Let us explain why fifty-fifty probabilities are so crucial. First, note that the agent will over-report her belief of $\theta_1$, if and only if, the utility function is (locally) supermodular, i.e., formally speaking, the difference $u_1-u_0$ is increasing as we move from $\underline{x}$ to $\overline{x}$ (see (\ref{EQ:proof theorem 1})).\footnote{The same condition is obtained by \cite{KadaneWinkler1988} and \cite{JaffrayKarni1999} for matching probabilities. This implies that our method works verbatim if we replace binarized scoring rules with matching probabilities. In fact, the same is true for other elicitation methods (see Section \ref{S:other elicitation methods}).} Then, we go on to show that this supermodularity condition is characterized by the preferences over these exact fifty-fifty lotteries.\footnote{Interestingly, a similar condition is used by \cite{Francetich2013} in his characterization result of supermodular vNM EU functions.} Let us illustrate why this is the case. Suppose that the payments are $\underline{x}=0$ and $\overline{x}=1$. Since the influential action leads to an increased probability $\hat{\mu}>\mu$, it will be the case that $u_1-u_0$ is increasing, if and only if, $\mathbb{E}_{\hat{\mu}}(u(\cdot))-\mathbb{E}_\mu(u(\cdot))$ is increasing.  
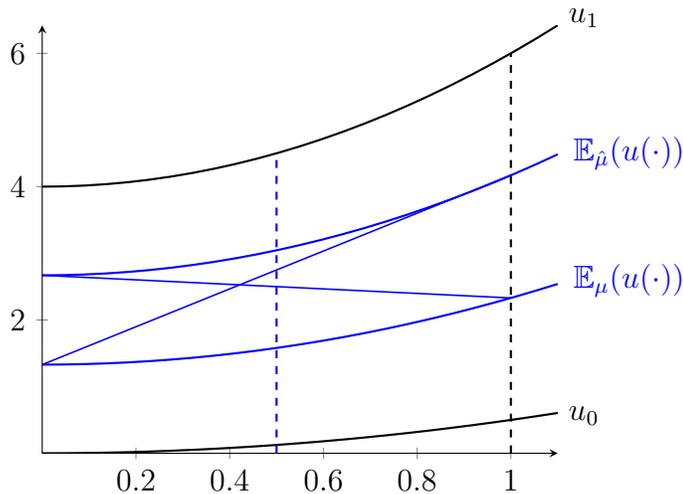
\begin{figure}[h!]
\centering
\begin{tikzpicture}
   \begin{axis}[my style]
               \addplot[domain=0:1.1,line width=0.8] {0.5*x^2};
               \addplot[domain=0:1.1,line width=0.8] {4+2*x^2};
             \addplot[domain=0:2,line width=0.8,blue,dashed]   coordinates { (0.5,0) (0.5,4.5) };
             \addplot[domain=0:2,line width=0.8,dashed]   coordinates { (1,0) (1,6) };
               \addplot[domain=0:1.1,line width=0.8,blue] {1.33+x^2};
               \addplot[domain=0:1.1,line width=0.8,blue] {2.67+1.5*x^2};
             \addplot[domain=0:1,blue,line width=0.6]   coordinates { (0,1.33) (1,4.17) };
             \addplot[domain=0:1,blue,line width=0.6]   coordinates { (0,2.67) (1,2.33) };
\end{axis}
          \draw (7.2,5.8) node {$u_1$};
         \draw[blue] (7.8,4) node {$\mathbb{E}_{\hat{\mu}}(u(\cdot))$};
        \draw[blue] (7.8,2.3) node {$\mathbb{E}_{\mu}(u(\cdot))$};
          \draw (7.2,0.5) node {$u_0$};
\end{tikzpicture}
\caption{Supermodularity is characterized by preferences over $\{A,B\}$. }
\label{FIG:Supermodularity}
\end{figure}
Then, take the two straight lines, one that connects the graph of $\mathbb{E}_{\hat{\mu}}(u(\cdot))$ evaluated at $1$ to the graph of $\mathbb{E}_\mu(u(\cdot))$ evaluated at $0$, and one that connects the graph of $\mathbb{E}_\mu(u(\cdot))$ evaluated at $1$ to the graph of $\mathbb{E}_{\hat{\mu}}(u(\cdot))$ evaluated at $0$. Finally, observe that the difference $\mathbb{E}_{\hat{\mu}}(u(\cdot))-\mathbb{E}_\mu(u(\cdot))$ in increasing, if and only if, these two lines intersect to left of $1/2$. But then again, the two lines intersect to the left of $1/2$, if and only if, $A$ is preferred to $B$. 

An alternative approach would have been to induce fifty-fifty probabilities via an information structure that yields two signals, each occurring with probability a half. In particular, suppose that we can construct an experiment that yields either signal $s_0$ (which is known to increase the probability of $\theta_0$) or signal $s_1$ (which is known to increase the probability $\theta_1$). Then, we would be asking the subject whether she prefers to be paid the good payoff $\overline{x}$ when $s_0$ is realized and the bad payoff $\underline{x}$ when $s_1$ is realized, or vice versa \citep[similarly to][]{Lu2019}. However, this alternative mechanism would rely on two very strong assumptions. First, we would need to make sure that we can design such an experiment. However, this would be practically impossible unless we knew the prior $\mu$, which of course we do not know. Second, if we were hypothetically able to design such an experiment, we would need to know that the agent updates in a Bayesian manner. This we do not know either. So overall, as theoretically appealing as this alternative mechanism may look, the implementation would be rather difficult.

\section{Discussion}\label{S:discussion}

\subsection{Non-binarized scoring rules}\label{S:non-binarized scoring rules}

The reason binarized scoring rules are appealing is because they do not require any assumption --- besides state-independence --- in order to guarantee truth-telling. This is in contrast to arbitrary proper scoring rules (e.g., the commonly-used quadratic scoring rule) which need to assume risk-neutrality --- on top of state-independence --- in order to retain incentive-compatibility. Of course, we should note that there is a debate on the tradeoff between incentive-compatibility and not needing reduce compound lotteries, e.g., see \cite{Selten1999}, \cite{Harrison2013}, \cite{Harrison2014}, \cite{Harrison2015}, just to mention a few. Although we personally find the overall evidence to favor binarized scoring rules, it is not our aim to participate in this discussion. Instead we ask the following question: if we assume risk-neutrality at both states, can we identify misreporting due to state-independence? The answer is affirmative: Using Theorem \ref{T:binarized} for any two payments $\underline{x}$ and $\overline{x}$ identifies at which of the two states the marginal utility is greater, which in turn reveals misreporting. In this sense, our method is not restricted to the binarized case.

\subsection{Beyond scoring rules}\label{S:other elicitation methods}

As we have already mentioned, our methodology holds verbatim if we replace binarized scoring rules with any other binarized elicitation task, such as matching probabilities or clock auctions. The reason is that the belief elicitation task is essentially independent of our additional task that identifies misreporting. For instance, similarly to our analysis of scoring rules, \cite{KadaneWinkler1988} and \cite{JaffrayKarni1999} show that matching probabilities will induce misreporting in favor of $\theta_1$, if and only if, the difference $u_1-u_0$ is increasing. Then, the choice between our lotteries $A$ and $B$ characterizes the monotonicity of $u_1-u_0$. Hence, our method will tell us whether the reported belief deviates from the actual one, and if so, in which direction.

\subsection{Hedging}\label{S:hedging}

A well-known concern regarding incentivized belief elicitation is the possibility of hedging \citep{Blanco2010}. In general terms this means that the agent can choose a optimal strategy for the grand decision problem which includes both the elicitation task and some other task, which does not induce truth-telling in the elicitation task. The most common manifestation of the problem is due to non-risk-neutral risk preferences. In our case, this is not really a problem, as long as we pay randomly for one of the two tasks, i.e., either the binarized scoring rule or the choice from $\{A,B\}$. Crucially, the chosen lottery in the second task will be realized only if the second task has been drawn to be compensated. This is done so that beliefs in the first task are not distorted in anticipation of the possibility that the influential action will be drawn in the second task.

\end{document}